\documentclass[11pt]{article}
\usepackage{amsfonts}
\usepackage{amsmath}
\usepackage{amssymb}
\usepackage{amsthm}
\usepackage{apalike}
\usepackage{graphicx}
\usepackage{hyperref}
\usepackage{setspace}
\usepackage[titletoc,title]{appendix}
\usepackage{a4wide}
\usepackage{color}
\usepackage{tikz, pgfplots}
\usepackage{tikz}
\usepackage{bera}
\usetikzlibrary{matrix}

\newtheorem{proposition}{Proposition}
\newtheorem{propositionA}{Proposition}

\theoremstyle{definition}
\newtheorem{remark}{Remark}
\newtheorem{remarkA}{Remark}

\newcommand{\E}{\mathbb{E}}

\begin{document}

\title{Black--Scholes--Merton Option Pricing Revisited:\\Did we Find a Fatal Flaw?\thanks{A first version of this paper appeared online on 9 January 2022. While any mistakes are ours, we thank without implication everyone who helped us by commenting on our analysis or with more general advice. We are especially grateful to Karma Dajani, as well as to Erik Balder, Alex Boer, Pedro Cazorla, Folkert van Cleef, John Cochrane, Timothy Crack, Freddy Delbaen, Ron Doney, Darrell Duffie, Richard Gill, Jakob de Haan, Michael Harrison, Jeroen van der Hoek, Neeltje van Horen, Julien Hugonnier, Jon Ingersoll, Ioannis Karatzas, David Kreps, Andreas Kyprianou, Roger Laeven, Iman van Lelyveld, Jussi Lindgren, Ronnie Loeffen, Roger Lord, Nick Nassuphis, Joe Pimbley, Rodney Ramcharan, Edo Schets, Alan Sokal, Rogier Swierstra, Babinu Uthup, Ton Vorst and Bernt \O ksendal, as well as participants in the discussion on \href{https://quant.stackexchange.com/questions/75683/three-mathematical-mistakes-in-black-scholes-merton-option-pricing}{Quant Stack Exchange}. Mark Mink (m.mink@dnb.nl) is an economist and works at De Nederlandsche Bank (DNB) in Amsterdam and Frans de Weert (frans@qbsm.nl) is a mathematician and former option trader and works at QBSM B.V in Zeist, both in The Netherlands. All views expressed are ours and not those of our employers.}}
\author{Mark Mink \and Frans J. de Weert}
\maketitle

\begin{center}
\large{Feedback or Suggestions?\\E-mail us or Join the Discussion on \href{https://quant.stackexchange.com/questions/75683/three-mathematical-mistakes-in-black-scholes-merton-option-pricing}{Quant Stack Exchange}}.
\end{center}
\hspace{1cm}

\begin{abstract}
The option pricing formula of Black and Scholes (1973) hinges on the continuous-time self-financing condition, which is a special case of the continuous-time budget equation of Merton (1971). The self-financing condition is believed to formalize the economic concept of portfolio rebalancing without inflows or outflows of external funds, but was never formally derived in continuous time. Moreover, and even more problematically, we discover a timing mistake in the model of Merton (1971) and show that his self-financing condition is misspecified both in discrete and continuous time. Our results invalidate seminal contributions to the literature, including the budget equation of Merton (1971), the option pricing formula of Black and Scholes (1973), the continuous trading model of Harrison and Pliska (1981), and the binomial option pricing model of Cox, Ross and Rubinstein (1979). We also show that Black and Scholes (1973) and alternative derivations of their formula implicitly assumed the replication result.\\
\textbf{Keywords:} Self-financing condition, Black--Scholes option pricing, continuous-time budget equation, binomial option pricing, probability theory.\newline
\end{abstract}

\clearpage

\section{Introduction}
The option pricing formula of Black and Scholes (1973)\nocite{blackscholes1973}, among many seminal results in the continuous-time finance literature, hinges on the continuous-time \textit{self-financing} condition.\footnote{Harrison and Kreps (1979)\nocite{harrisonkreps1979} coined the term self-financing, and point out that ``this restriction is implicit in the original treatment of Black and Scholes (1973) and is explicitly displayed in Merton (1977)\nocite{merton1977}."} This condition is believed to formalize the economic concept of an asset portfolio that is rebalanced without inflows or outflows of external funds. Remarkably, however, the continuous-time self-financing condition has never been derived in continuous time. Merton (1971)\nocite{1971} states that ``it is not obvious" how to do this, so that ``it is necessary to examine the discrete-time formulation of the model and then to take limits carefully to obtain the continuous-time form." This discrete-time analogy has remained the sole motivation for the continuous-time specification of the condition, although Harrison and Pliska (1981)\nocite{harrisonpliska1981} once cautiously stated to ``have no doubt that these are the right definitions, but a careful study of this issue is certainly needed." Bj\"{o}rk (2009)\nocite{bjork2009} emphasizes the informal ``motivating nature" of the discrete-time analogy, but also in this modest role the analogy is not straightforward. The reason is that the path-by-path interpretation of the discrete-time model is lost in the continuous-time model, since the It\^{o} integrals used in continuous time hold almost surely but not pathwise (Bick and Willinger, 1994).

The absence of a formal proof for the continuous-time self-financing condition is a bit bewildering, since the condition played a key role in making techniques from continuous-time stochastic calculus available for the analysis of intertemporal consumption and portfolio optimization (e.g., Merton 1971, 1973a)\nocite{merton1971}\nocite{merton1973ICAPM}.\footnote{This analytical framework became more technically sophisticated over time, but still hinges on the self-financing condition. See, for example, textbooks by Bj\"{o}rk (2009)\nocite{bjork2009}, Delbaen and Schachermayer (2006)\nocite{delbaenschachermayer2006}, Duffie (2001)\nocite{duffie2001}, Karatzas and Kardaras (2021)\nocite{karatzaskardaras2021}, Lamberton and Lapeyre (2011)\nocite{lambertonlapeyre2011}, Merton (1990)\nocite{merton1990} and Shreve (2004)\nocite{shreve2004}.} From a mathematical perspective, to be able to use these techniques it does not suffice to define the condition based on a discrete-time analogy. Instead, one needs to prove in continuous time that the continuous-time condition is correctly specified. Our analysis shows, however, that the problem with the self-financing condition is much bigger than this. Specifically, we show that the timing in the model of Merton (1971) does not adequately reflect the sequencing of rebalancing in his economic narrative. As a result, both his discrete-time and his continuous-time model inadvertently assume that the investor knows which stochastic process drives the stock return. However, even when agents in the model know that the stock price in continuous time is driven by a Wiener process, for example, they cannot know which one out of the infinitely many potential Wiener processes this is (just as they do not have a coin that always comes up heads when the stock goes up and tails when it goes down). In fact, because mistakes like these are so easily made, it is good practice among mathematicians to label different Wiener processes, as this reduces the risk of accidentally equating one to the other. By establishing that both the discrete-time and the continuous-time self-financing condition are misspecified, our analysis also invalidates the continuous-time budget equation of Merton (1971), the option pricing formula of Black and Scholes (1973), and the continuous trading model of Harrison and Pliska (1981, 1983)\nocite{harrisonpliska1983}.

At this point, we may have left some readers in disbelief or even anger about such a bold statement as the one above. After all, as Warren Buffet observed in 2009, the option pricing formula of Black and Scholes (1973) ``has approached the status of holy writ in finance." Such a status, of course, also provides fertile ground for confirmation bias. Moreover, the rapid growth of the literature may have obfuscated the fact that a formal proof for the continuous-time self-financing condition was never provided. Academics and practitioners could therefore have come to believe quite naturally that the self-financing condition had a stronger substantiation than actually was the case. With our new results in mind, however, we look back with some amazement at earlier indications in the literature that something was wrong with the formula, and the subsequent attempts to rationalize these issues without questioning the formula itself.

Firstly, the paper of Black and Scholes (1973) was known to contain mathematical mistakes. These mistakes were considered to have been repaired by Merton (1973) with the self-financing condition (e.g., Bergman (1981)\nocite{bergman1981}, Beck (1993)\nocite{beck1993}, Bartels (1995)\nocite{bartels1995}, MacDonald (1997)\nocite{macdonald1997}, and Carr (1999)\nocite{carr1999}), even though he did not prove this condition in continuous time.\footnote{Long (1974)\nocite{long1974} points out that the replication result hinges on the assumption by Black and Scholes (1973) that the option price is only a function of the stock price and the residual maturity of the option, and not of other time-varying factors such as the market price of risk.} Secondly, F\"{o}llmer (2001)\nocite{follmer2001} showed that the definition of a self-financing trading strategy implies an arbitrage opportunity if the stock price process has zero quadratic variation. Most studies therefore a priori exclude these stock price processes (such as the fractional Brownian motion, see also Bj\"{o}rk and Hult, 2005\nocite{bjorkhult2005}), even though there is no economic reason why these processes would by definition imply an arbitrage opportunity in an efficient market. Thirdly, the theoretical result of Black and Scholes (1973) was contradicted by several empirical findings, such as the implied volatility surface (Macbeth and Merville, 1979\nocite{macbethmerville1979}), the imperfect correlation between call prices and the underlying stock (Bakshi, Cao and Chen, 2000\nocite{bakshicaochen2000}), the excess returns on delta-hedged positions (Bakshi and Kapadia, 2003\nocite{bakshikapadia2003}), and the credit spread puzzle implied by the Merton (1974)\nocite{merton1974} model (Jones, Mason and Rosenfeld, 1984\nocite{jonesmasonrosenfeld1984}). These findings were interpreted as evidence against the idealized capital market conditions that Black and Scholes (1973) assumed, but their formula itself remained unquestioned, although it hinged on a condition that lacked a formal proof.

While our result that the self-financing condition is misspecified offers a concise explanation for the above phenomena, the feedback we received on previous versions of this analysis helped us to pinpoint two additional mathematical mistakes in the framework of Black and Scholes (1973). Firstly, even if the self-financing condition would have been correct, we point out that the alternative proofs by Merton (1973b, 1977) and Harisson and Pliska (1981, 1983), which were believed to have repaired the mistakes in the original analysis of Black and Scholes (1973), assume the replication result that is to be established. As Mink and de Weert (2024)\nocite{minkdeweert2024} discuss in larger detail, the claim that an option can be replicated with a stock and a risk free bond was therefore an assumed result without a formal proof. Secondly, even if one also sidesteps this problem with their analytical framework, we show that the partial differential equation of Black and Scholes (1973) implies that there are paths where the rebalanced portfolio is not self-financing or does not replicate the option. Hence, we conclude that for multiple reasons, the option pricing framework of Black and Scholes (1973) is fatally flawed.

To provide more context to the result that the option pricing formula of Black and Scholes (1973) is incorrect, we also discuss the statement of Cox, Ross and Rubinstein (1979)\nocite{coxrossrubinstein1979} that the continuous-time formula of Black and Scholes (1973) is equal to the limit of their discrete-time binomial option pricing formula. Our analysis invalidates the discrete-time hedging argument, however, by establishing that in discrete time the self-financing condition is incorrect as well. Moreover, a closer look at the binomial model shows that Cox, Ross and Rubinstein (1979) implicitly assume that options and their underlying stocks are only exposed to systematic risk (i.e., to the market portfolio) and not to idiosyncratic risk. We then show that this assumption, which is difficult to rationalize from an economic perspective, directly implies their difference equation, without using the discrete-time self-financing condition or a hedging argument. From a mathematical perspective, the difference equation that underlies the binomial option pricing formula of Cox, Ross and Rubinstein (1979) therefore reflects an implicit assumption rather than a hedging argument. Hence, the hedging argument of Black and Scholes (1973) cannot be interpreted as the limit of a discrete-time hedging argument by Cox, Ross and Rubinstein (1979), since there is no discrete-time hedging argument to begin with.

The implications of our analysis for academic research are difficult to oversee, since the continuous-time and discrete-time self-financing conditions underlie many seminal results in the finance literature. For financial market participants, furthermore, the absence of a correct option pricing formula implies that one can only speculate about the sign and magnitude of the mispricing that results when using the formula of Black and \text{Scholes} (1973). Moreover, there is no longer a basis for the argument that hedging an option with the underlying stock yields a risk-free portfolio. While option traders do not use the formula of Black and \text{Scholes} (1973) directly, they typically use the underlying hedging argument. This argument allowed traders to price options solely based on their view of the volatility of the underlying stock, without the need to have a view on the expected return of the stock and the expected return of the option. By disproving this hedging argument, our analysis implies that options are not redundant securities and that trading them cannot be reduced to trading the volatility of the underlying stock. Still, the value of an option can be expressed as its expected payoff at maturity discounted by its expected return. This expression resembles the option pricing formula of Black and Scholes (1973), but contains the percentage drift on the stock and the (time-varying) percentage drift on the option instead of the risk-free rate of return.

\section{Analysis}\label{analysis}
We first summarize a standard derivation of the replicating portfolio that underlies the option pricing formula of Black and Scholes (1973)\nocite{blackscholes1973}. Staying close to their original notation, consider a call option with value $w_{t}$ at time $t$ that has a strike price $k$ and a maturity date $T$ and is written on a stock with value $x_{t}$. In continuous time and conditional on $\mathcal{F}_{t_{0}}$, the value of the stock follows a geometric Brownian motion:
\begin{align}\label{dx}
\int_{t_{0}}^{t_{1}}dx_{t}=\int_{t_{0}}^{t_{1}}x_{t}\mu dt+\int_{t_{0}}^{t_{1}}x_{t}\sigma dW_{t}\hspace{0.25cm}\text{with $x_{t_{0}}>0$ given},
\end{align}
where $\mu$ is the percentage drift, $\sigma>0$ is the percentage standard deviation, and $\{W_{t}\}_{t\geq t_{0}}$ is a Wiener process on a filtered probability space $(\Omega, \mathcal{F}, \{\mathcal{F}_t\}_{t\geq t_{0}}, \mathbb{P})$ satisfying the usual conditions. We use integral notation $\int dx_{t}$ instead of the informal $dx_{t}$ to avoid confusion, since equation (\ref{dx}) has no path-by-path interpretation due to the unbounded total variation of the Wiener process (e.g., $dx_{t}$ is not a ``small change" in the stock price, a point to which we return later). The distinction between bounded and unbounded variation is sometimes overlooked when interpreting stochastic integrals, but we assume that the reader is familiar with these mathematical concepts. Black and Scholes derive a pricing formula for the option by constructing a replicating portfolio, but to date it is well-known that their analysis of this portfolio contains mathematical mistakes (see also Mink and de Weert, 2024\nocite{minkdeweert2024}). Their portfolio is therefore typically replaced by the one defined by Merton (1973b, 1977)\nocite{merton1973}\nocite{merton1977}, who replicates the option with a portfolio of $w_{1,t}$ stocks and $\beta_t$ risk-free bonds with value $b_t$ and return $db_{t}= b_{t}rdt$. Specifically, Merton (1973b, 1977) defines the option value as: 
\begin{equation}\label{wval}
w_t=w_{1,t}x_t+\beta_t b_t,
\end{equation}
where $w_{1,t}=\partial w_{t}/\partial x_{t}$ is the partial derivative of the option value with respect to the value of the stock (i.e., delta). This partial derivative changes over time as it depends on the value of the stock and on the remaining maturity of the option. Black and Scholes thus assume that the number of stocks changes over time as well, and refer to this trading strategy as \textit{continuous rebalancing}. Applying the product rule of stochastic integration to equation (\ref{wval}) shows that the return on the option conditional on $\mathcal{F}_{t_{0}}$ equals:
\begin{equation}\label{prsi}
\int_{t_{0}}^{t_{1}}dw_t=\int_{t_{0}}^{t_{1}}w_{1,t}dx_t+\int_{t_{0}}^{t_{1}}dw_{1,t}x_{t}+\int_{t_{0}}^{t_{1}}dw_{1,t}dx_{t}+\int_{t_{0}}^{t_{1}}\beta_{t}db_{t}+ \int_{t_{0}}^{t_{1}}d\beta_tb_{t}+ \int_{t_{0}}^{t_{1}}d\beta_tdb_{t}.
\end{equation}
Merton (1973b, 1977) assumes that the next equality also holds:
\begin{equation}\label{ctsfc}
\int_{t_{0}}^{t_{1}}dw_{1,t}x_{t}+ \int_{t_{0}}^{t_{1}}dw_{1,t}dx_{t}+\int_{t_{0}}^{t_{1}}d\beta_t b_{t}+ \int_{t_{0}}^{t_{1}}d\beta_tdb_{t}=0,
\end{equation}
which he substitutes into equation (\ref{prsi}) to obtain the return on the option:
\begin{equation}\label{dw}
\int_{t_{0}}^{t_{1}}dw_t=\int_{t_{0}}^{t_{1}}w_{1,t}dx_t+\int_{t_{0}}^{t_{1}}\beta_t db_t.
\end{equation}
In summary, the replicating portfolio of Black and Scholes is described by equations (\ref{wval}) and (\ref{dw}), and hinges on the validity of equation (\ref{ctsfc}). This equation was named the \textit{continuous-time self-financing condition} by Harrison and Kreps (1979)\nocite{harrisonkreps1979}, who point out that ``this restriction is implicit in the original treatment of Black and Scholes (1973) and is explicitly displayed in Merton (1977)\nocite{merton1977}." Indeed, by using the self-financing condition, Merton (1973b, 1977) works around a mathematical mistake in the original derivation of Black and Scholes, who calculate the return on their replicating portfolio without applying the product rule of stochastic integration (e.g., Bergman, 1981).

While it is trivially obvious that equation (\ref{ctsfc}) holds for an asset portfolio that is not rebalanced (i.e., for $dw_{1,t}=d\beta_{t}=0)$, the self-financing condition is also believed to hold for asset portfolios that are continuously rebalanced without requiring an inflow or outflow of external funds, based on an analysis of Merton (1971)\nocite{merton1971}. In this analysis, Merton (1971) introduces the condition as a backward-looking accounting relationship in discrete time, after which he derives a forward-looking version of this relationship and takes the limit to obtain equation (\ref{ctsfc}) in continuous time.\footnote{Specifically, the continuous-time self-financing condition is equal to equation (9') in Merton (1971)\nocite{merton1971} when consumption $C_{t}dt=0$, in which case the rebalanced portfolio does not exhibit inflows or outflows of funds.\label{budget}} We show below, however, that the self-financing condition of Merton (1971) is misspecified because of mathematical mistakes in his analysis. Before we discuss these mistakes in more detail, the next proposition first establishes the correct specification of the continuous-time self-financing condition. The proof for the proposition uses the same limiting procedure that Merton (1971) uses to justify equation (\ref{ctsfc}), and shows that this procedure yields an additional constraint that the literature has overlooked. The proof applies this limiting procedure to the discrete-time self-financing condition of Harrison and Pliska (1981), although Proposition \ref{correctsfcmerton} in the appendix shows that using the discrete-time condition of Merton (1971) yields the same result. Still, it is remarkable that two seminal option pricing papers disagree about the correct specification of the self-financing condition in discrete time, and use the same (but incorrect) specification in continuous time.

\begin{proposition}\label{correctsfc}
For a rebalanced portfolio of $\alpha_{t}$ stocks and $\beta_{t}$ risk-free bonds, the correct continuous-time self-financing condition is $\int_{t_{0}}^{t_{1}}d\alpha_{t}x_{t}+\int_{t_{0}}^{t_{1}}d\beta_t b_{t}=0$, and the (path-by-path defined) processes $\alpha_{t}$ and $\beta_{t}$ must be continuous with bounded variation.
\end{proposition}
\begin{proof}
Using our notation, and conditional on $\mathcal{F}_{t_{0}}$, Harrison and Pliska (1981) state in their equation (2.2) that a trading strategy is self-financing in discrete time if:
\begin{equation}
\alpha_{t}x_{t}+\beta_{t}b_{t}=\alpha_{t+\Delta t}x_{t}+\beta_{t+\Delta t}b_{t},
\end{equation}
for all $t_{0}\leq t<T$, where $T$ is the maturity date of the option. Rearranging this equation yields:
\begin{equation}\label{discretesfc}
\Delta \alpha_{t}x_{t}+\Delta \beta_{t}b_{t}=0,
\end{equation}
where we defined $\Delta \alpha_{t}=\alpha_{t+\Delta t}-\alpha_{t}$ and $\Delta \beta_{t}=\beta_{t+\Delta t}-\beta_{t}$. Note that equation (\ref{discretesfc}) must hold path-by-path, otherwise there would be an inflow or outflow of funds for some realizations of $\Delta \alpha_{t}$ and $\Delta \beta_{t}$. When defining $\Delta t=\left(T-t\right)/n$ with $n>0$, the corresponding equation when the investor rebalances continuously equals:
\begin{equation}\label{limsfcexante}
\lim_{n \to \infty}\left[\Delta \alpha_{t}x_{t}+\Delta\beta_{t} b_{t}\right]=0,
\end{equation}
which in stochastic integral notation gives the continuous-time self-financing condition:
\begin{equation}
\int_{t_{0}}^{t_{1}}d\alpha_{t}x_{t}+ \int_{t_{0}}^{t_{1}}d\beta_t b_{t}=0,
\end{equation}
where the processes $\alpha_{t}$ and $\beta_{t}$ must be continuous with bounded variation, because equation (\ref{limsfcexante}) is defined path-by-path. Because this bounded variation implies that $\int_{t_{0}}^{t_{1}}d\alpha_{t}dx_{t}=0$ and $\int_{t_{0}}^{t_{1}}d\beta_{t}db_{t}=0$, the return on the portfolio then equals $\int_{t_{0}}^{t_{1}}\alpha_{t}dx_{t}+ \int_{t_{0}}^{t_{1}}\beta_t db_{t}$.
\end{proof}

The previous proposition illustrates that when taking the limit of the discrete-time model, one needs to impose the constraint that the processes $\alpha_{t}$ and $\beta_{t}$ have bounded variation in continuous time (just as in discrete time). Merton (1971) overlooks this constraint, and as a result runs into the issue of whether the terms $\int_{t_{0}}^{t_{1}}d\alpha_{t}dx_{t}$ and $\int_{t_{0}}^{t_{1}}d\beta_{t}db_{t}$ should be interpreted as portfolio returns or as cash flows from rebalancing (see his footnote 10). When the processes $\alpha_{t}$ and $\beta_{t}$ have bounded variation, however, both terms are equal to zero so that this question naturally disappears. In addition, the constraint that the processes $\alpha_{t}$ and $\beta_{t}$ have bounded variation ensures that the investor can determine how his stock and bond position changed after he rebalanced his portfolio. Bick and Willinger (1994) point out that this is not possible for the replication strategy of Black and Scholes, who set $\alpha_{t}=w_{1,t}$ even though the process $w_{1,t}$ has unbounded variation (and $\int_{t_{0}}^{t_{1}}dw_{1,t}dx_{t}=\int_{t_{0}}^{t_{1}}w_{11,t}\sigma^{2}x_{t}^{2}dt$ is not equal to zero). An investor who tries to implement their strategy therefore cannot determine how the number of stocks in his portfolio changed over time and how many stocks he currently holds (as Remark \ref{r:pathwise} in the appendix illustrates). Just as the previous issue, this issue disappears when the processes $\alpha_{t}$ and $\beta_{t}$ have bounded variation, but this constraint also implies that setting $\alpha_{t}=w_{1,t}$ is not a valid choice for a self-financing trading strategy.

While Proposition \ref{correctsfc} implies that the self-financing condition of Merton (1971) is misspecified in continuous time, a timing mistake in his model causes him to arrive at an incorrect discrete-time condition as well (the above proof for Proposition \ref{correctsfc} therefore used the discrete-time self-financing condition of Harrison and Pliska (1981), which does not suffer from this issue). The next remark shows that this second mistake originates at a point where his (appealing) economic narrative is not adequately formalized by his mathematical model. As a result, the discrete-time self-financing condition of Merton (1971) inadvertently assumes that the behaviour of the investor is (contemporaneously) driven by the same random number generator as the one that drives the stock returns. However, while the behaviour of the investor can be assumed to have the same distribution as the stock price movements, one cannot assume that the investor can generate the same random outcomes.\footnote{Even if the stock price process would have bounded variation, this is another reason why setting $\alpha_{t}=w_{1,t}$ is not a valid choice for a trading strategy, since $w_{1,t}$ is driven by the same Wiener process as $x_{t}$.} Because mistakes like these are so easily made, it is good practice among mathematicians to label different random number generators (e.g., Wiener processes), as this reduces the risk of accidentally equating one to the other.

\begin{remark}\label{r:eaep}
For a portfolio of $\alpha_{t}$ stocks and $\beta_{t}$ risk-free bonds, Merton (1971) models discrete-time rebalancing without inflows or outflows of funds with the ex-post condition:
\begin{equation}\label{dtsfcexpost}
[\Delta \alpha_{t} x_{t}+\Delta \alpha_{t} \Delta x_{t}+\Delta\beta_{t} b_{t}+\Delta\beta_{t} \Delta b_{t}\vert\mathcal{F}_{t+\Delta t}]=0.
\end{equation}
While Merton (1971) emphasizes in his economic narrative that the investor first observes the stock return $\Delta x_{t}$ and thereafter chooses $\Delta \alpha_{t}$ and $\Delta \beta_{t}$, we note that his model does not reflect this sequencing since all variables are given the same time subscript $t$.\footnote{Merton explains in his narrative that the change in the number of stocks $[\Delta \alpha_{t}\vert\mathcal{F}_{t+\Delta t}]$ is implemented just before $t+\Delta t$, during an infinitesimal ``period" after the realization of the stock return $[\Delta x_{t}\vert\mathcal{F}_{t+\Delta t}]$. When one would try to formalize this narrative mathematically, however, one runs into more fundamental mathematical questions such as that a period shorter than $\Delta t$ does not exist in a discrete-time model.} While this may seem a minor issue, the implications thereof become visible when Merton (1971) argues that equation (\ref{dtsfcexpost}) also holds stochastically (i.e., ex-ante), so that:
\begin{equation}\label{dtsfcexante}
[\Delta \alpha_{t} x_{t}+\Delta \alpha_{t} \Delta x_{t}+\Delta\beta_{t} b_{t}+\Delta\beta_{t} \Delta b_{t}\vert\mathcal{F}_{t}]=0.
\end{equation}
Unless $[\Delta \alpha_{t}\vert\mathcal{F}_{t}]=0$, however, this equation implies that an investor can ex-ante choose the stochastic processes $[\Delta \alpha_{t}\vert\mathcal{F}_{t}]$ and $[\Delta \beta_{t}\vert\mathcal{F}_{t}]$ so that they (contemporaneously) depend on the stochastic process for the stock returns $[\Delta x_{t}\vert\mathcal{F}_{t}]$. Even if the investor knows the distribution of these future returns, however, he does not have the underlying random number generator. For example, in the context of continuous-time equation (\ref{dx}), the investor knows that the stock returns are driven by a Wiener process, but he does not know which one out of the infinitely many Wiener processes this is (e.g., he cannot distinguish whether the returns are driven by $\{W_{t}\}_{t\geq t_{0}}$ or by $\{W_{t}^{*}\}_{t\geq t_{0}}$, or by $\{W_{t}^{**}\}_{t\geq t_{0}}$ etc.). The Wiener process that drives his behaviour (i.e., how he changes the number of stocks and bonds in his portfolio) therefore cannot be equal to the one that drives the stock returns. The discrete-time self-financing condition in equation (\ref{dtsfcexante}) is therefore misspecified, and should be defined in line with Harrison and Pliska (1981) as equation (\ref{discretesfc}) instead.\footnote{Equation (\ref{discretesfc}) can also be derived by repairing the timing mistake in the model of Merton (1971), who argues that the numbers of stocks and bonds in equation (\ref{dtsfcexpost}) are changed after observing $x_{t+\Delta t}$. These changes therefore cannot have time subscript $t$ but should have the next time subscript $t+\Delta t$. Writing $x_{t+\Delta t}=x_{t}+\Delta x_{t}$, equation (\ref{dtsfcexpost}) then becomes $\left[\Delta \alpha_{t+\Delta t}x_{t+\Delta t}+\Delta\beta_{t+\Delta t} b_{t+\Delta t}\vert \mathcal{F}_{t+2\Delta t}\right]=0$, which for one period earlier equals $\left[\Delta \alpha_{t}x_{t}+\Delta\beta_{t} b_{t}\vert \mathcal{F}_{t+\Delta t}\right]=0$ and therefore yields equation (\ref{discretesfc}) when conditioning on $\mathcal{F}_{t}$.}
\end{remark}

In summary, the above analysis shows that the continuous-time and the discrete-time self-financing condition of Merton (1971) are both misspecified. While Harrison and Pliska (1981) adopt the same continuous-time condition as Merton (1971), their discrete-time is different and is correctly specified. Using this discrete-time specification to derive a new and correct continuous-time self-financing condition shows that the latter has the economically appealing feature of being interpretable path-by-path, just as its discrete-time counterpart. At the same time, the correctly specified continuous-time self-financing condition implies that the replicating portfolio does not yield a valid rebalancing strategy, which is the \underline{first fatal flaw} that we establish in the framework of Black and Scholes.

The feedback that we received on previous versions of this analysis helped us to pinpoint two additional flaws in the framework of Black and Scholes that can be regarded as fatal in themselves. For the sake of argument, we therefore set aside the problems with the continuous-time self-financing condition and focus on the second mathematical issue. More specifically, we consider the claim of Black and Scholes that an investor can replicate an option by rebalancing a portfolio of stocks and risk-free bonds. Of course, one can take such a rebalanced portfolio of stocks and risk-free bonds as starting point for the analysis, but the claim that such a portfolio can replicate an option then requires a formal proof (in a discrete-time context, this claim is known to be incorrect if the stock price has more than two potential outcomes at the end of each period, as is discussed in Section \ref{binomial} on the binomial model of Cox, Ross and Rubinstein, 1979). Black and Scholes try to prove this claim by establishing that a long position of one option plus a continuously rebalanced short-position of $w_{1,t}$ stocks yields a risk-free bond. However, as their analysis contains several mathematical mistakes (as summarized in Mink and de Weert, 2024\nocite{minkdeweert2024}), their proof is not used in the literature anymore.

While Black and Scholes do not provide a (correct) proof for their claim that an option can be replicated with a rebalanced portfolio of stocks and risk-free bonds, the literature generally refers to Harrison and Pliska (1981, 1983)\nocite{harrisonpliska1983} for a rigorous derivation of this result. These authors build a continuous trading model in which every option can be ``attained" with a rebalanced portfolio of securities, if the prices of these securities satisfy a certain martingale representation property. Crucially, however, while a replicating strategy is also an attaining strategy, the converse is not necessarily true. The reason is that a replicating strategy has the same value as the option for all $t\leq T$, while an attaining strategy has the same value as the option at the maturity date $T$, and may or may not have the same value as an option at $t<T$. As Harrison and Pliska (1981) point out, there can thus be many alternative trading strategies that all attain the same option, and these strategies may not all have the same price. While the rebalanced portfolio of Black and Scholes is just one example of an attaining strategy, Harrison and Pliska (1981, 1983) discard all other strategies (i.e., these are not defined ``admissible") to ensure that this strategy becomes unique in their model. While their model is therefore consistent with the claim that the trading strategy of Black and Scholes replicates an option, it does not prove that this claim is correct.

In summary, both the original analysis of Black and Scholes as well as the subsequent analysis of Harrison and Pliska (1981, 1983) do not prove that the option can be replicated by a rebalanced portfolio of stocks and risk-free bonds. Likewise, and as discussed in more detail by Mink and de Weert (2024)\nocite{minkdeweert2024}, Merton (1973b, 1977) does not prove this replication result either. That is, using the terminology of Harrison and Pliska (1981, 1983), he analyses one specific trading strategy that attains the option, but disregards any other ones that may have a different price, and does not prove that his strategy not only attains but also replicates the option.\footnote{In addition, both Merton (1973b, 1977) and Harrison and Pliska (1981, 1983) assume that the value of a rebalanced portfolio cannot become negative. Without proof this is a strong claim from a stochastic perspective, because the rebalanced portfolio contains a short bond position. In discrete time the portfolio value can therefore become negative when the stock price drops by a sufficiently large amount. One could argue that this claim is nevertheless justified in continuous time, because the investor can avoid this scenario by continuously rebalancing the portfolio. Such an argument should then be established mathematically, however, and cannot simply be stated without proof.} We therefore conclude that the replication result is implicitly assumed rather than mathematically derived, which is the \underline{second fatal flaw} in the framework of Black and Scholes.

Again for the sake of argument, we now set aside both flaws discussed above and just assume that the rebalanced portfolio is self-financing and also replicates an option. The next proposition shows, however, that these assumptions are then contradicted by the option pricing formula of Black and Scholes. More specifically, the partial differential equation associated with the option pricing formula implies that there are paths where the rebalanced portfolio does not satisfy the self-financing condition or does not replicate the option (likewise, Proposition \ref{p:rp} in the appendix shows that the option pricing formula implies that there are paths where the self-financing condition does not hold). This contradiction is therefore the \underline{third fatal flaw} in the analytical framework of Black and Scholes.

\begin{proposition}
The partial differential equation of Black and Scholes implies that there are paths where the rebalanced portfolio does not satisfy the self-financing condition or does not replicate the option.
\end{proposition}
\begin{proof}
Black and Scholes define the value of their rebalanced portfolio as:
\begin{equation}\label{ptf}
w_{t}=w_{1,t}x_{t}+\beta_{t}b_{t},
\end{equation}
where $b_{t}$ is a risk-free bond. Applying the product rule of stochastic integration yields:
\begin{equation}
\int_{t_{0}}^{t_{1}}dw_{t}=\int_{t_{0}}^{t_{1}}w_{1,t}dx_{t}+\int_{t_{0}}^{t_{1}}\beta_tdb_{t}+\int_{t_{0}}^{t_{1}}dw_{1,t}x_t+\int_{t_{0}}^{t_{1}}dw_{1,t}dx_{t}+\int_{t_{0}}^{t_{1}}d\beta_t b_t+\int_{t_{0}}^{t_{1}}d\beta_t db_t,
\end{equation}
which Black and Scholes combine with the continuous-time self-financing condition:
\begin{equation}\label{sfc}
\int_{t_{0}}^{t_{1}}dw_{1,t}x_t+\int_{t_{0}}^{t_{1}}dw_{1,t}dx_{t}+\int_{t_{0}}^{t_{1}}d\beta_t b_t+\int_{t_{0}}^{t_{1}}d\beta_t db_t=0,
\end{equation}
to obtain the return on the rebalanced portfolio:
\begin{equation}\label{dptf}
\int_{t_{0}}^{t_{1}}dw_{t}=\int_{t_{0}}^{t_{1}}w_{1,t}dx_{t}+\int_{t_{0}}^{t_{1}}\beta_tdb_{t}.
\end{equation}
By applying It\^{o}'s lemma to $w=f\left(x,t\right)$, Black and Scholes obtain the option return:
\begin{equation}\label{doption}
\int_{t_{0}}^{t_{1}}dw_{t}=\int_{t_{0}}^{t_{1}}w_{1,t}dx_{t}+\int_{t_{0}}^{t_{1}}w_{2,t}dt+\frac{1}{2}\int_{t_{0}}^{t_{1}}w_{11,t}dx_t^2.
\end{equation}
Since Black and Scholes argue that the rebalanced portfolio replicates the option, equations (\ref{dptf}) and (\ref{doption}) need to be equal for all paths of the option value, which together with $\int\beta_tdb_{t}=\int\beta_tb_{t}rdt=\int w_{t}rdt-w_{1,t}x_{t}rdt$ yields:
\begin{equation}\label{integralequation}
\int_{t_{0}}^{t_{1}}w_{2,t}dt=\int_{t_{0}}^{t_{1}}w_{t}rdt-\int_{t_{0}}^{t_{1}}w_{1,t}x_{t}rdt-\frac{1}{2}\int_{t_{0}}^{t_{1}}w_{11,t}dx_t^2.
\end{equation}
Black and Scholes argue that equation (\ref{integralequation}) holds for:
\begin{equation}\label{pde}
w_{2,t}=w_{t}r-w_{1,t}x_{t}r-\frac{1}{2}w_{11,t}x_t^2\sigma^2,
\end{equation}
which is the partial differential equation that underlies their option pricing formula. However, substituting this partial differential equation in equation (\ref{integralequation}) yields:
\begin{equation}\label{hiha}
\int_{t_{0}}^{t_{1}}w_{11,t}dx_t^2=\int_{t_{0}}^{t_{1}}w_{11,t}x_t^2\sigma^2dt,
\end{equation}
which holds almost surely but not for every path, see Shreve (2004). Hence, the partial differential equation implies that there are paths where the return on the rebalanced portfolio in equation (\ref{dptf}) is not equal to the return on the option in equation (\ref{doption}), which contradicts the fact that the rebalanced portfolio replicates the option. Or, put differently, if the portfolio in equation (\ref{ptf}) replicates the option, the partial differential equation implies that there are paths where the portfolio return is not equal to equation (\ref{dptf}), so that for these paths the continuous-time self-financing condition does not hold.
\end{proof}

The above three fatal flaws imply that several results in the continuous-time finance literature are incorrect, including the self-financing condition and budget equation of Merton (1971), the continuous trading model of Harrison and Pliska (1981, 1983), and the hedging argument and option pricing formula of Black and Scholes.\footnote{Black and Scholes derived the formula using a hedging argument, but their paper also contains an alternative derivation based on the CAPM. This CAPM derivation is not used in the literature anymore, and Mink and de Weert (2024)\nocite{minkdeweert2024} show that it contains mathematical mistakes as well.} As an alternative to this formula, the value of an option can be expressed as its expected payoff at maturity discounted by its expected return. The next remark illustrates that when the value of the stock evolves according to equation (\ref{dx}), this expression resembles the option pricing formula of Black and Scholes, but contains the percentage drift on the stock and the (time-varying) percentage drift on the option instead of the risk-free rate of return.

\begin{remark}
At time $t$, the expected payoff on an option at maturity date $T$ is:
\begin{align}\label{expp}
\E\left[w_{T}\vert\mathcal{F}_{t}\right]=\int_{k}^{\infty}(x-k)dF_{T}(x)=\text{exp}\left(\mu\left(T-t\right)\right)N\left(d^{*}_{1,t}\right)x_{t}-N\left(d^{*}_{2,t}\right)k,
\end{align}
where $F_{T}(x) =\Pr\left(x_T\leq x\right|x_t)$ is the cumulative distribution function of $x_T$ conditional on $\mathcal{F}_{t}$. Note that $d^{*}_{1,t}=\frac{1}{\sigma\sqrt{T-t}}\left(\ln\left(\frac{x_{t}}{k}\right)+\left(\mu+\frac{1}{2}\sigma^{2}\right)\left(T-t\right)\right)$ and $d^{*}_{2,t}=d^{*}_{1,t}-\sigma\sqrt{T-t}$ differ from $d_{1,t}$ and $d_{2,t}$ in the formula of Black and Scholes unless $\mu=r$. The option value $w_t$ is equal to the expected payoff at maturity times the discount factor $\text{exp}\left(-\mu_{w,t}\left(T-t\right)\right)$, where $\mu_{w,t}$ is the (time-varying) percentage drift of the option. If $\mu_{w,t}=\mu=r$ the resulting expression is the same as the option pricing formula of Black and Scholes.
\end{remark}

\subsection{Binomial option pricing}\label{binomial}
While the previous analysis implies that the continuous-time option pricing formula of Black and Scholes is incorrect, Cox, Ross and Rubinstein (1979)\nocite{coxrossrubinstein1979} argue that this formula is a special limiting case of their discrete-time binomial option pricing formula. The previous analysis also invalidates the discrete-time hedging argument, however, by showing that the self-financing condition is incorrect in discrete time as well. Moreover, the next remark shows that even if we would set this problem aside, the analysis of Cox, Ross and Rubinstein (1979) actually illustrates that there are many cases in discrete time where an option cannot be replicated with a rebalanced portfolio of stocks and risk-free bonds.

\begin{remark}
Cox, Ross and Rubinstein (1979) state explicitly that in a model with three-state or trinomial stock price movements, an option cannot be replicated with a rebalanced portfolio of stocks and risk-free bonds. The reason is that any choice of $\alpha_{t}$ stocks and $\beta_{t}$ risk-free bonds can replicate the option value in two states but not in the third. More specifically, the analysis of Cox, Ross and Rubinstein (1979) implies that options cannot be replicated in discrete time when stock price movements follow a multinomial distribution with $n>2$ states.
\end{remark}

The remark raises the question of why an option cannot be hedged in a multinomial stock price model while it can be hedged in a binomial model (and even though the terminal values of a trinomial tree, for example, also converge to the lognormal distribution). The remainder of this section shows, however, that from a mathematical perspective the binomial model of Cox, Ross and Rubinstein (1979) has nothing to do with hedging or the self-financing condition to begin with. The reason is that the binomial model implicitly assumes that stocks and options that are exposed to systematic risk are not exposed to idiosyncratic risk. As we establish, this assumption directly implies their difference equation, without using the discrete-time self-financing condition or a hedging argument.

To derive their discrete-time option pricing formula, Cox, Ross and Rubinstein (1979) assume that the stock price and the option price each follow a binomial process, and point out that ``three-state or trinomial stock price movements will not lead to an option pricing formula based solely on arbitrage considerations." An implicit restriction of the binomial process is that the asset is only exposed to systematic risk and not to idiosyncratic risk. Specifically, assuming binomial processes for the stock and the option implies that each of them is exposed to just a single risk factor. Otherwise, if the stock would be exposed to two risk factors, the number of potential outcomes for its price would at least be equal to four instead of two, and likewise for the option. Since Cox, Ross and Rubinstein (1979) assume that the expected returns on the stock and the option exceed the risk-free return, their single risk factor is priced and is therefore systematic. The next proposition shows that this assumption directly implies the difference equation of Cox, Ross and Rubinstein (1979), without using the discrete-time self-financing condition or a hedging argument.

\begin{proposition}\label{p:crrsr}
The discrete-time difference equation of Cox, Ross and Rubinstein (1979) is directly implied -- without using the self-financing condition or a hedging argument -- by their assumption that the stock and the option are only exposed to systematic risk.
\end{proposition}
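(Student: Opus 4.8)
The plan is to mirror the structure of the proof of Proposition \ref{p:bssr}, but now working in the discrete-time binomial setting of Cox, Ross and Rubinstein (1979) rather than in continuous time. First I would write down the two systematic-risk (CAPM-type) relations for the stock and the option over a single discrete period, replacing the differentials of the previous proof with finite increments: the excess return on the stock, $\Delta x_{t}/x_{t}-r\Delta t=\beta_{x}\left(\Delta m_{t}/m_{t}-r\Delta t\right)$, and the excess return on the option, $\Delta w_{t}/w_{t}-r\Delta t=\beta_{w}\left(\Delta m_{t}/m_{t}-r\Delta t\right)$. Eliminating the common market factor $\Delta m_{t}/m_{t}-r\Delta t$ from these two equations, exactly as in the continuous-time proof, should yield a discrete analogue of equation (\ref{dw3}), namely a relation of the form $\Delta w_{t}=w_{t}r\Delta t+\alpha_{t}\Delta x_{t}-\alpha_{t}x_{t}r\Delta t$ with $\alpha_{t}=w_{t}\beta_{w}/x_{t}\beta_{x}$.

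Next I would impose the binomial assumption explicitly. In the up-state the stock moves to $ux_{t}$ and the option to $w_{u}$; in the down-state the stock moves to $dx_{t}$ and the option to $w_{d}$. The single-factor relation derived above must hold in \emph{both} states simultaneously, since the market factor takes one value in each state and the coefficients $r$, $\alpha_{t}$ are the same. Writing the eliminated-factor equation once for the up-move and once for the down-move gives two linear equations; the key observation is that because the idiosyncratic risk has been assumed away, $\alpha_{t}$ is pinned down by the ratio of the option's spread to the stock's spread across the two states, i.e.\ $\alpha_{t}=\left(w_{u}-w_{d}\right)/\left(ux_{t}-dx_{t}\right)$. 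Substituting this back should collapse the two state-equations into a single pricing identity relating $w_{t}$, $w_{u}$, $w_{d}$, $u$, $d$ and $r$, which is the difference equation of Cox, Ross and Rubinstein (1979) expressed in terms of their risk-neutral probability $p=\left(R-d\right)/\left(u-d\right)$ with $R=\exp\left(r\Delta t\right)$, namely $w_{t}=R^{-1}\left(pw_{u}+\left(1-p\right)w_{d}\right)$.

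I would carry out the two steps in this order: derive the factor-eliminated return relation first (purely algebraic, parallel to the previous proposition), then specialize to the two binomial states to solve for $\alpha_{t}$ and read off the difference equation. Throughout I would emphasize, as the proposition's statement demands, that no hedging argument and no self-financing condition has been invoked — only the assumption that both assets load on a single priced (systematic) factor, together with the two-state structure.

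The step I expect to be the main obstacle is the transition from the single-period factor relation to the two-state elimination of $\alpha_{t}$. In continuous time the previous proof could argue that ``the left-hand side must be deterministic, hence $\alpha_{t}=w_{1,t}$'' because $dx_{t}$ carries the only source of randomness; in discrete time the analogous move is that the factor-eliminated equation must hold state-by-state, and one must check that the binomial structure genuinely forces $\alpha_{t}$ to equal the hedge ratio $\left(w_{u}-w_{d}\right)/\left(ux_{t}-dx_{t}\right)$ rather than merely permitting it. The delicate point is to show this value of $\alpha_{t}$ is \emph{implied} by the single-factor assumption alone, so that the resulting difference equation reflects the exposure restriction and not a hedging choice; once $\alpha_{t}$ is so determined, substituting it into the factor-eliminated relation and simplifying to the risk-neutral form should be routine algebra.
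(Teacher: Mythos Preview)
Your proposal is correct and follows essentially the same route as the paper: write the single-factor excess-return relations for the stock and the option, specialize them to the up and down states of the binomial tree, and eliminate the common ratio $\alpha_{t}=w_{t}\beta_{w}/x_{t}\beta_{x}$ (which the paper leaves written as $C\beta_{C}/S\beta_{S}$) between the two resulting equations to obtain the Cox--Ross--Rubinstein difference equation. Your anticipated obstacle is not a real one: because $\alpha_{t}$ is \emph{defined} through the betas before the two states are ever considered, the pair of state-by-state equations serves only to eliminate it algebraically, so no hedging choice enters and the single-factor assumption alone carries the argument.
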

\begin{proof}
Cox, Ross and Rubinstein (1979) start with a one-period binomial tree in discrete time. Using their notation, the stock price at the start of the period is $S$ and the option price at this time is $C$. At the end of the period, the stock price is either equal to $uS$ or to $dS$, where $u>r^{*}>d>0$ and $r^{*}=1+r$ (we added an $^*$  to distinguish their notation from our symbol $r$). Likewise the option price at the end of the period is either $C^{u}$ or $C^{d}$. As this binomial tree implies that the stock and the option are only exposed to systematic risk, their excess rates of return are equal to:
\begin{align}
\left(uS-S\right)/S-(r^{*}-1)&=\beta_{S}\left(\left(m^{u}-m\right)/m-(r^{*}-1)\right), \\
\left(dS-S\right)/S-(r^{*}-1)&=\beta_{S}\left(\left(m^{d}-m\right)/m-(r^{*}-1)\right), \\
\left(C^{u}-C\right)/C-(r^{*}-1)&=\beta_{C}\left(\left(m^{u}-m\right)/m-(r^{*}-1)\right), \\
\left(C^{d}-C\right)/C-(r^{*}-1)&=\beta_{C}\left(\left(m^{d}-m\right)/m-(r^{*}-1)\right),
\end{align}
where $\left(m^{u}-m\right)/m$ is the rate of return on the market portfolio when the market goes up, $\left(m^{d}-m\right)/m$ is the rate of return when it goes down, and $\beta_{S}$ and $\beta_{C}$ indicate the exposure of the stock and of the option to the market portfolio. Combining these equations yields:
\begin{align}
C^{u}-Cr^{*}&=\frac{C\beta_{C}}{S\beta_{S}}\left(uS-Sr^{*}\right),\label{perfcor}\\
C^{d}-Cr^{*}&=\frac{C\beta_{C}}{S\beta_{S}}\left(dS-Sr^{*}\right).\label{perfcor2}
\end{align}
Using one of these equations to substitute for $\beta_{C}/\beta_{S}$ in the other gives:
\begin{equation}
C=\left[\left(\frac{r^{*}-d}{u-d}\right)C^{u}+\left(\frac{u-r^{*}}{u-d}\right)C^{d}\right]/r^{*},
\end{equation}
which is the difference equation in equation (2) of Cox, Ross and Rubinstein (1979).
\end{proof}

The previous proposition shows that the difference equation of Cox, Ross and Rubinstein (1979) is directly implied by the assumption of their binomial model that the stock and the option are only exposed to systematic risk. From a mathematical perspective, their difference equation is produced by this assumption rather than the result of a hedging argument. As a consequence, despite the mathematical commonalities between the binomial option pricing formula and the option pricing formula of Black and Scholes, the continuous-time hedging argument of Black and Scholes cannot be interpreted as the limit of a discrete-time hedging argument by Cox, Ross and Rubinstein (1979).

\section{Conclusion}
The discrete-time and continuous-time self-financing conditions that are used in the finance literature are mathematically misspecified. This result invalidates seminal contributions to this literature, including the continuous-time budget equation of Merton (1971), the hedging argument and option pricing formula of Black and Scholes (1973)\nocite{blackscholes1973}, the continuous trading model of Harrison and Pliska (1981, 1983), and the binomial option pricing model of Cox, Ross and Rubinstein (1979). Moreover, Black and Scholes (1973), Merton (1973b, 1977) and Harrison and Pliska (1981, 1983) implicitly assume the replication result that was to be established, while Cox, Ross and Rubinstein (1979) implicitly assume their difference equation without using a hedging argument. Hence, there is no longer a basis for the claim that options are redundant securities and that hedging them with the underlying stock yields a risk-free portfolio. Still, the value of an option can be expressed as its expected payoff at maturity discounted by its expected return. This expression resembles the option pricing formula of Black and Scholes (1973), but contains the percentage drift on the stock and the (time-varying) percentage drift on the option instead of the risk-free rate of return.

\bibliographystyle{apalike}
\bibliography{20230522MinkDeWeertarXiv}

\setcounter{section}{1}

\section*{Appendix}
\begin{propositionA}\label{correctsfcmerton}
For a rebalanced portfolio of $\alpha_{t}$ stocks and $\beta_{t}$ risk-free bonds, the correct continuous-time self-financing condition is $\int_{t_{0}}^{t_{1}}d\alpha_{t}x_{t}+\int_{t_{0}}^{t_{1}}d\beta_t b_{t}=0$, and the (path-by-path defined) processes $\alpha_{t}$ and $\beta_{t}$ must be continuous with bounded variation.
\end{propositionA}
\begin{proof}
Following Merton (1971), for a portfolio of $\alpha_{t}$ stocks and $\beta_{t}$ risk-free bonds, rebalancing in discrete time without inflows or outflows of external funds implies that:
\begin{equation}\label{discretesfc2}
\left[\Delta \alpha_{t}x_{t}+\Delta \alpha_{t}\Delta x_{t}+\Delta\beta_{t} b_{t}+\Delta \beta_{t}\Delta b_{t}\vert \mathcal{F}_{t}\right]=0.
\end{equation}
Note that equation (\ref{discretesfc2}) must hold path-by-path, otherwise there would be an inflow or outflow of funds for some realizations of $\Delta \alpha_{t}$ and $\Delta \beta_{t}$. When defining $\Delta t=\left(T-t\right)/n$ with $n>0$, the corresponding equation when letting $n$ go to $\infty$ must still hold path-by-path, which implies that $\lim_{n \to \infty}\Delta \alpha_{t}$ and $\lim_{n \to \infty}\Delta \beta_{t}$ must have bounded variation. Because of this bounded variation $\lim_{n \to \infty} \Delta \alpha_{t}\Delta x_t=0$ and $\lim_{n \to \infty} \Delta \beta_{t} \Delta x_t=0$, so that:
\begin{equation}\label{limsfcexante2}
\lim_{n \to \infty}\left[\Delta \alpha_{t}x_{t}+\Delta\beta_{t} b_{t}\vert \mathcal{F}_{t}\right]=0,
\end{equation}
which in stochastic integral notation gives the continuous-time self-financing condition:
\begin{equation}
\int_{t_{0}}^{t_{1}}d\alpha_{t}x_{t}+ \int_{t_{0}}^{t_{1}}d\beta_t b_{t}=0,
\end{equation}
where the processes $\alpha_{t}$ and $\beta_{t}$ are continuous with bounded variation.
\end{proof}

\begin{remarkA}\label{r:pathwise}
Black and Scholes argue that an investor who starts with $w_{1,t_0}$ stocks can rebalance this position so that he buys (or sells) $w_{1,s}-w_{1,t_0}$ stocks between time $t_0$ and $s$. In discrete time, Black and Scholes formalize this rebalancing by letting the investor change his stock position over discrete intervals $\Delta t=\left(s-t_0\right)/n$, where $n>0$ is a finite integer. For an integer $i>0$, at time $t_i=t_{0}+i\Delta t$ the investor buys $\Delta w_{1,t_i}=w_{1,t_{i}+\Delta t}-w_{1,t_i}$ stocks, so that the number of stocks bought between $t_{0}$ and $s$ is:
\begin{equation}\label{discrete}
w_{1,s}-w_{1,t_0}=\sum_{i=0}^{i=n-1}\Delta w_{1,t_i}.
\end{equation}
Because this equality holds path-by-path, it holds conditional on $\mathcal{F}_{t_{0}}$ and conditional on $\mathcal{F}_{s}$. At time $s$, the investor therefore knows exactly how many stocks he bought between $t_0$ and $s$, namely $\sum_{i=0}^{i=n-1}\Delta w_{1,t_i}$, so that this sum is well-defined and deterministic. Next, to formalize rebalancing in continuous time, Black and Scholes let $n \to \infty$ and argue that: 
\begin{equation}\label{continuous}
w_{1,s}-w_{1,t_0}=\lim_{n\to \infty} \sum_{i=0}^{i=n-1}\Delta w_{1,t_i}=\int_{t=t_{0}}^{t=s}dw_{1,t}.
\end{equation}
Their argument overlooks, however, that the It\^{o} integral $\int_{t_0}^{s} dw_{1,t}$ is not defined conditional on $\mathcal{F}_{s}$. That is, because the Wiener process has unbounded total variation the It\^{o} integral is not defined path-by-path, but is only defined almost surely and conditional on $\mathcal{F}_{t_{0}}$. The formalization of continuous rebalancing in equation (\ref{continuous}) therefore implies that $w_{1,s}-w_{1,t_{0}}$ is not defined conditional on $\mathcal{F}_{s}$, so that at time $s$ the investor does not know how many stocks he bought since $t_{0}$ and what his current position is.
\end{remarkA}

\begin{propositionA}\label{p:rp}
The option pricing formula of Black and Scholes implies that there are paths where the continuous-time self-financing condition does not hold.
\end{propositionA}
\begin{proof}
The option pricing formula of Black and Scholes states that the value of a call option with maturity date $T>t$ and strike price $k>0$ is equal to:
\begin{equation}\label{bsformula}
w_{t}=N\left(d_{1,t}\right)x_{t}-\text{exp}\left(-r\left(T-t\right)\right)N\left(d_{2,t}\right)k,
\end{equation}
where $d_{1,t}=\frac{1}{\sigma\sqrt{T-t}}\left(\ln\left(\frac{x_{t}}{k}\right)+\left(r+\frac{1}{2}\sigma^{2}\right)\left(T-t\right)\right)$ and $d_{2,t}=d_{1,t}-\sigma\sqrt{T-t}$, and where $N\left(\cdot\right)$ is the standard normal cumulative distribution function. Since $w_{1,t}=\partial w_{t}/\partial x_{t} =N\left(d_{1,t}\right)$, equation (\ref{bsformula}) can also be written as:
\begin{equation}\label{wval2}
w_{t}=w_{1,t}x_{t}+\beta_{t}b_{t},
\end{equation}
where $\beta_{t}=-N\left(d_{2,t}\right)$ and $b_{t}=\text{exp}\left(-r\left(T-t\right)\right)k$. Moreover, following Black and Scholes, using It\^{o}'s lemma implies that the option return equals:
\begin{equation}\label{dwte}
\int_{t_{0}}^{t_{1}}dw_t=\int_{t_{0}}^{t_{1}}w_{1,t}dx_t+\int_{t_{0}}^{t_{1}}w_{2,t}dt+\frac{1}{2}\int_{t_{0}}^{t_{1}}w_{11,t}dx_t^2,
\end{equation}
where $w_{2,t}=\partial w_{t}/\partial t$ is the partial derivative of the value of the option with respect to the remaining maturity of the option (i.e., theta), and where $w_{11,t}=\partial^{2}w_{t}/\partial x_{t}^{2}$ is the partial derivative of $w_{1,t}$ with respect to the value of the stock (i.e., gamma). As is well-known, equation (\ref{bsformula}) implies that these partial derivatives are equal to $w_{2,t}=-\frac{x_{t}N'\left(d_{1,t}\right)\sigma}{2\sqrt{T-t}}+\beta_{t}b_{t}r$ and $w_{11,t}=\frac{N'\left(d_{1,t}\right)}{x_{t}\sigma\sqrt{T-t}}$, where $N'\left(\cdot\right)$ is the standard normal probability density function. Substituting these partial derivatives in equation (\ref{dwte}) and using $b_{t}rdt=db_{t}$ yields:
\begin{align}\label{dw2}
\int_{t_{0}}^{t_{1}}dw_{t}&=\int_{t_{0}}^{t_{1}}w_{1,t}dx_{t}+\int_{t_{0}}^{t_{1}}\beta_{t}db_{t}+\frac{1}{2}\int_{t_{0}}^{t_{1}}w_{11,t}dx_t^2-\frac{1}{2}\int_{t_{0}}^{t_{1}}w_{11,t}\sigma^{2}x_t^2, \notag \\ 
&=\int_{t_{0}}^{t_{1}}w_{1,t}dx_{t}+\int_{t_{0}}^{t_{1}}\beta_{t}db_{t}.
\end{align}
Equations (\ref{wval2}) and (\ref{dw2}) are the same as equations (\ref{wval}) and (\ref{dw}), and therefore only hold if the continuous-time self-financing condition in equation (\ref{ctsfc}) holds. However, equation (\ref{dw2}) uses the equality $\int dx_{t}^{2}=\int\sigma^{2}x_{t}^{2}dt$, which holds almost surely but not pathwise (see Shreve, 2004). Hence, there are paths for which the second line of equation (\ref{dw2}) does not describe the return on the rebalanced portfolio, so that for these paths the continuous-time self-financing condition does not hold.
\end{proof}

\end{document}